\title{Stochastic Kronecker Graph on Vertex-Centric BSP}
\date{}
\author[*]{Ernest Ryu}
\author[$\dagger$]{Sean Choi}
\affil[*]{Institute for Computational and Mathematical Engineering, Stanford University}
\affil[$\dagger$]{Department of Computer Science, Stanford University}
\newtheorem{thm}{Theorem}
\newtheorem*{lm}{Lemma}
\newtheorem*{rem}{Remark}
\theoremstyle{remark}
\theoremstyle{remark}
\newcommand\blfootnote[1]{%
  \begingroup
  \renewcommand\thefootnote{}\footnote{#1}%
  \addtocounter{footnote}{-1}%
  \endgroup
}
\begin{document}
\maketitle
\begin{abstract}
Recently Stochastic Kronecker Graph (SKG),
a network generation model,
and vertex-centric BSP,
a graph processing framework like Pregel,
have attracted
much attention in the network analysis community.
Unfortunately the two are not very well-suited for each other
and thus an implementation of SKG 
on vertex-centric BSP
must either be done serially or in an unnatural manner.

In this paper, we present a new network generation model,
which we call Poisson Stochastic Kronecker Graph (PSKG),
that generate edges according to the Poisson distribution.
The advantage of PSKG is that
it is easily parallelizable on vertex-centric BSP,
requires no communication between computational nodes,
and yet retains all the desired properties of SKG.
\blfootnote{Part of this work was done while the second author was visiting LinkedIn.} 
\end{abstract}

\section{Introduction}
With the advent of massive real-world network data and the
computation power to process them,
network analysis is becoming a major topic
of scientific research.
As approaches to model real-world networks,
Stochastic Kronecker Graph (SKG)
\cite{Leskovec:2010:KGA:1756006.1756039}
and its predecessor R\nobreakdash-MAT \cite{Chakrabarti04r-mat:a}
have attracted interest in the
network analysis community
due to its simplicity and its ability to capture
many properties of real-world networks.
As a programming model to process large graphs,
vertex-centric BSP, such as 
Pregel \cite{Malewicz:2010:PSL:1807167.1807184},
Apache Giraph \cite{giraph},
GPS \cite{gps},
and 
Apache Hama \cite{hama},
has become increasingly popular as an alternative
to MapReduce and Hadoop,
which are ill-suited to run massive scale graph algorithms \cite{Malewicz:2010:PSL:1807167.1807184}.

The two, however, are not well-suited for each other. The obvious approach of parallelizing SKG,
which is to generate edges in parallel, is not ``vertex-centric'' in nature and therefore is
unnatural to program and runs inefficiently in vertex-centric BSP.

Therefore we present a new network generation model,
which we call Poisson Stochastic Kronecker Graph (PSKG),
as an alternative.
In this model, the out-degree of each vertex is determined by
independent but non-identical Poisson random variables
and the destination node of the edges are determined in
a recursive manner similar to SKG.

The resulting algorithm, PSKG, is essentially equivalent
to SKG and will therefore retain all the desired properties of it.
Unlike SKG, however, PSKG is embarrassingly parallel in a vertex-centric
manner and therefore is very well-suited for vertex-centric BSP.

\begin{figure}[ht!]
\center
\includegraphics[width=0.5\textwidth]{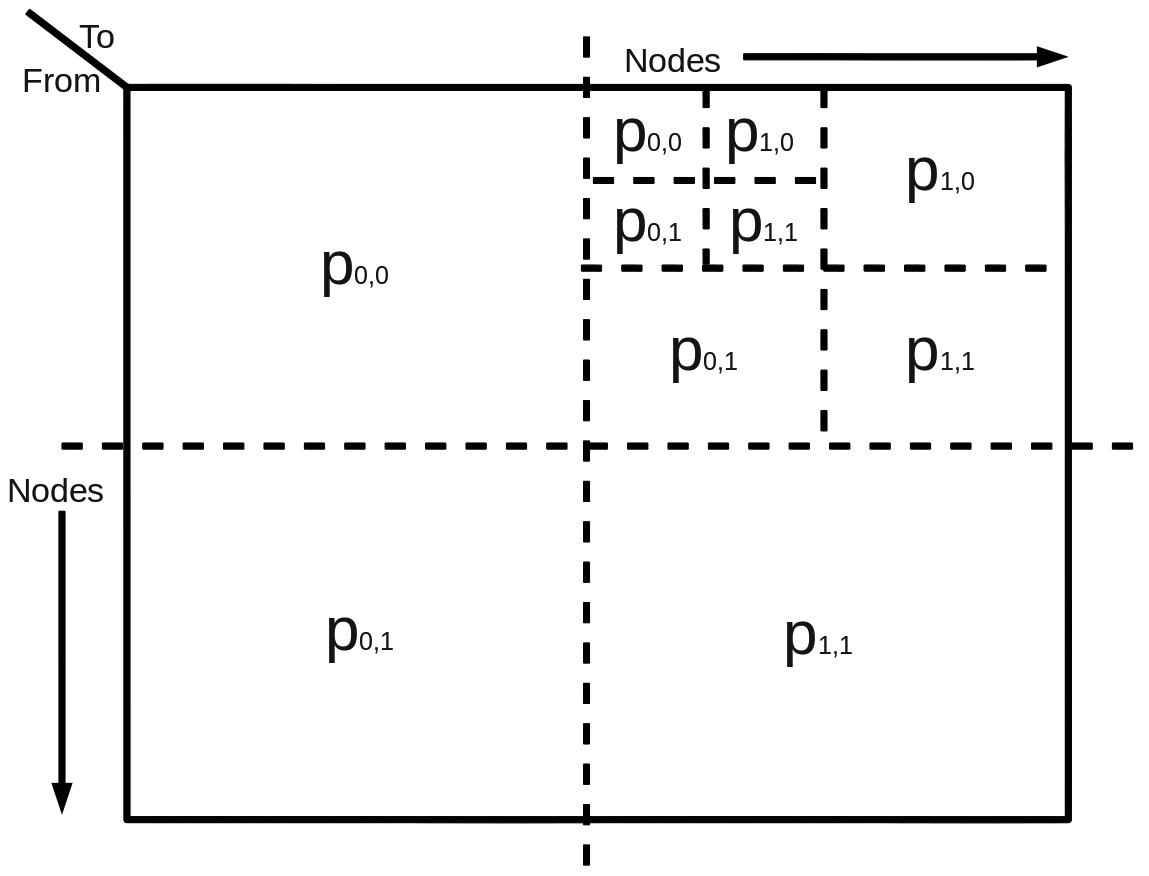}
\caption{Illustration of SKG. At each of the $k$ steps
a sub-region is chosen with probability $p_{i,j}$.}
\label{fig:rmat}
\end{figure}

\section{Theory and Algorithm}
The main result of this paper, PSKG, is presented in Algorithm \ref{alg:pskg}.
We shall, however, start by discussing the original SKG  and an equivalent formulation of it
as this path will motivate PSKG.
We then present the main algorithm.
We conclude this section by discussing issues with load balancing.
\begin{rem}
Throughout this paper we shall use zero-based indexing for vectors and matrices.
\end{rem}

\subsection{Stochastic Kronecker Graph}

Consider the problem of generating a random graph of
$E$ edges and $N=n^{k}$ vertices, where $k\in \mathbb{N}$.
Let 
\[
P=\begin{bmatrix}
p_{0,0}&	\cdots&	p_{0,n-1}\\
\vdots&	\ddots&	\vdots\\
p_{n-1,0}&\cdots & p_{n-1,n-1}
\end{bmatrix}\in \mathbb{R}^{n\times n}
\]
be the ``initiator matrix'' where $p_{i,j}\ge 0$
and $\sum_{i,j}p_{i,j}=1$.

The approach in SKG is to start
off with an empty adjacency matrix and ``drop'' edges into the matrix one at a time.
Each edge chooses one of the $n\times n$ partitions with probability $p_{i,j}$ respectively.
The chosen partition is again subdivided into smaller partitions, and the procedure
is repeated $k$ times until we reach a single cell of the $N\times N$ adjacency matrix and place an edge.
Figure \ref{fig:rmat} illustrates the idea and Algorithm \ref{alg:rmat} makes it concrete.

\begin{algorithm}
\caption{SKG}
\begin{algorithmic} 
\FOR {$i=1,\cdots E$}
\STATE $u=v=0$
\FOR {$j=1,\cdots k$}
\STATE With probability $p_{rs}$ choose subregion $(r,s)$
\STATE $u=nu+r;\,v=nv+s$
\ENDFOR
\STATE Add edge $(u,v)$
\ENDFOR
\end{algorithmic}
\label{alg:rmat}
\end{algorithm}
Let $P_k=P\otimes P\otimes \cdots \otimes P$ be the $k$\nobreakdash-th Kronecker power of $P$.
Then we can interpret SKG as generating $m$ edges independently%
\footnote{The edge generations are not quite independent due to possible ``collisions.''}
where any given edge is $(u,v)$ with probability $(P_k)_{uv}$.
Now we can apply Bayes' rule.
\begin{gather}
\mathbf{P}(\text{edge is }(u,v))
=(P_k\mathbf{1})_u\frac{(P_k)_{uv}}{(P_k\mathbf{1})_u}
\label{eq:bayes}
\\=
\mathbf{P}(\text{destination is }u)
\mathbf{P}(\text{source is }v|\text{destination is }u)
\nonumber
\end{gather}
The decomposition \eqref{eq:bayes} permits us to choose the source node first and then the destination node
rather than simultaneously
and we do this with a recursive algorithm to avoid the explicit construction of $P_k$.
Let
\[
U
=\begin{bmatrix}
\sum_{j=0}^{n-1} p_{0,j}\\
\vdots\\
\sum_{j=0}^{n-1} p_{n-1,j}\\
\end{bmatrix}
\quad
V
=\begin{bmatrix}
\frac{p_{0,0}}{U_0}&	
\cdots&
\frac{p_{0,n-1}}{U_0}\\
\vdots&\ddots&\vdots\\
\frac{p_{n-1,0}}{U_{n-1}}&
\cdots &
\frac{p_{n-1,n-1}}{U_{n-1}}
\end{bmatrix}
\]
and we arrive at Algorithm \ref{alg:rmat2} which is equivalent to 
the original formulation of SKG.
\begin{algorithm}
\caption{Equivalent SKG}
\begin{algorithmic} 
\FOR {$i=1,\cdots E$}
\STATE //Select source node $u$
\STATE $u=0$
\FOR {$j=1,\cdots k$}
\STATE With probability $U_r$ choose subregion $r$
\STATE $u=nu+r$
\ENDFOR
\STATE //Select destination node $v$
\STATE $v=0$; $z=u$
\FOR {$j=1,\cdots k$}
\STATE $l=\mathrm{mod}(z,n)$
\STATE With probability $V_{ls}$ choose subregion $s$
\STATE $v=nv+s;\,z=z/n$ (integer division)
\ENDFOR
\STATE Add edge $(u,v)$
\ENDFOR
\end{algorithmic}
\label{alg:rmat2}
\end{algorithm}

Here we note that the source node selection procedure is (approximately) a multinomial
random variable with parameters $E$ and $U^{[k]}$, where $U^{[k]}$ is the
$k$\nobreakdash-th Kronecker power of $U$.

\subsection{Poisson Stochastic Kronecker Graph}
Due to the following elementary result \cite{Brown1984}
we can replace the source node selection procedure, a multinomial random variable,
with i.i.d. Poisson random variables.

\begin{lm}
Let $X_1,\cdots X_s$ be independent Poisson random variables each with mean $\alpha p_1,\cdots \alpha p_s$,
where $\alpha>0$, $p_1,\cdots p_s\ge 0$, and $\sum ^{s}_{i=1}p_{i}=1$.
Then 
\begin{gather*}
\mathbf{P}\left(X_1=x_1,\cdots X_s=x_s\middle|  \sum^s_{i=1}X_i=m\right)\\
=
\frac{m!}{x_1!\cdots x_s!}p_1^{x_1}\cdots p_k^{x_s}
\end{gather*}
i.e. conditioned on the sum $X_1,\cdots X_s$ is distributed as a multinomial distribution.
\end{lm}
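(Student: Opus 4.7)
The plan is a direct computation from the definition of conditional probability, leveraging two standard facts about Poisson random variables: the product form of their joint PMF under independence, and the fact that a sum of independent Poissons is again Poisson.

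First I would write out the joint probability. By independence, for any nonnegative integers $x_1, \ldots, x_s$,
\[
\mathbf{P}(X_1 = x_1, \ldots, X_s = x_s) = \prod_{i=1}^s \frac{(\alpha p_i)^{x_i} e^{-\alpha p_i}}{x_i!}.
\]
Next I would invoke the fact that $S := \sum_{i=1}^s X_i$ is itself Poisson, with mean $\alpha \sum_{i=1}^s p_i = \alpha$ (this follows from moment generating functions or a direct convolution argument, and only uses independence together with $\sum p_i = 1$). Hence
\[
\mathbf{P}(S = m) = \frac{\alpha^m e^{-\alpha}}{m!}.
\]

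Then I would apply the definition of conditional probability. Since the event $\{X_1 = x_1, \ldots, X_s = x_s\}$ is contained in $\{S = m\}$ precisely when $\sum x_i = m$, the conditional probability in that case is the ratio of the two expressions above. The factor $e^{-\alpha} = \prod_i e^{-\alpha p_i}$ cancels, and using $\sum x_i = m$ the $\alpha$'s collect as $\alpha^{\sum x_i} / \alpha^m = 1$, yielding
\[
\frac{m!}{x_1! \cdots x_s!}\, p_1^{x_1} \cdots p_s^{x_s},
\]
which is exactly the multinomial PMF. (If $\sum x_i \neq m$, both sides are zero and the claim is vacuous.)

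There is no real obstacle here; the main care needed is simply handling the edge case where some $p_i = 0$ (in which case the corresponding $X_i \equiv 0$ almost surely, and the formula remains valid under the convention $0^0 = 1$) and verifying that the independence plus $\sum p_i = 1$ assumptions are exactly what makes the $e^{-\alpha}$ and $\alpha^m$ factors cancel cleanly.
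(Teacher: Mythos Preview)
Your argument is correct and is exactly the standard direct computation: write the joint Poisson PMF, use that the sum is Poisson($\alpha$), and cancel the $e^{-\alpha}$ and $\alpha^m$ factors. The paper does not actually supply its own proof of this lemma; it simply cites it as an elementary result from the literature, so there is nothing further to compare.
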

We are finally ready to state the main algorithm of this paper.
Let $E$ be the expected number of total edges
while $P,U,V,k$ are defined the same as before.
\begin{algorithm}
\caption{PSKG}
\begin{algorithmic} 
\STATE Scatter $E,P,U,V,k$
\FOR {Each vertex $u$}
\STATE //Determine out-degree of $u$
\STATE $p=1$; $z=u$
\FOR {$j=1,\cdots k$}
\STATE $l=\mathrm{mod}(z,n)$;  $p=pU_{l}$
\STATE $z=z/n$ (integer division)
\ENDFOR
\STATE Generate $X\sim \mathrm{Poisson}(Ep)$

\STATE //For each edge determine destination vertex
\FOR {$i=1,\cdots X$}
\STATE $v=0$; $z=u$
\FOR {$j=1,\cdots k$}
\STATE $l=\mathrm{mod}(z,n)$
\STATE With probability $V_{ls}$ choose subregion $s$
\STATE $v=nv+s;\, z=z/n$ (integer division)
\ENDFOR

\STATE Add edge $(u,v)$
\ENDFOR

\ENDFOR
\end{algorithmic}
\label{alg:pskg}
\end{algorithm}

PSKG will retain all the desired properties of
SKG graphs. Specifically,
say there is a desired property observed by
SKG graphs of all sizes with probability $1-\varepsilon$.
Then the Poisson SKG graphs will also have the desired property with
probability $1-\varepsilon$ by the following lemma.
\begin{lm}
Let $A$ be an event that occurs with probability $1-\varepsilon$
for SKG graphs of all sizes. Then $A$ will also hold
with probability $1-\varepsilon$ for Poisson SKG
graphs as well.
\end{lm}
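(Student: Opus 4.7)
The plan is to use the previous lemma (independent Poissons conditioned on their sum are multinomial) to show that PSKG, when conditioned on producing a specific total number of edges $M$, has the same distribution as SKG run with $M$ edges. The result then follows by the law of total probability.

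First I would identify the per-vertex Poisson means. In Algorithm~\ref{alg:pskg}, vertex $u$ generates $X_u \sim \mathrm{Poisson}(E p_u)$ independently, where $p_u = (U^{[k]})_u$. Since $\sum_u p_u = 1$ (as $U$ sums to $1$, so does its $k$-th Kronecker power), the total edge count $M = \sum_u X_u$ is Poisson with mean $E$, and by the previous lemma, conditioned on $M$, the vector $(X_u)$ is multinomial with parameters $M$ and $(p_u)$. This is precisely the source-node distribution appearing (approximately) in Algorithm~\ref{alg:rmat2}.

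Second I would observe that the destination-selection inner loop in PSKG is textually identical to the destination-selection loop in Algorithm~\ref{alg:rmat2}: given the source vertex $u$, the destination is drawn by $k$ recursive subdivisions using the rows of $V$. Consequently, conditioned on the event $\{M = m\}$, the random multigraph produced by PSKG is equal in distribution to that produced by SKG on $m$ edges. Therefore, for any event $A$,
\begin{equation*}
\mathbf{P}_{\text{PSKG}}(A \mid M = m) = \mathbf{P}_{\text{SKG}_m}(A) \ge 1-\varepsilon
\end{equation*}
by the hypothesis that $A$ holds with probability at least $1-\varepsilon$ for SKG of every size $m$. Summing against the marginal of $M$,
\begin{equation*}
\mathbf{P}_{\text{PSKG}}(A) = \sum_{m=0}^\infty \mathbf{P}_{\text{PSKG}}(A \mid M=m)\,\mathbf{P}(M=m) \ge 1-\varepsilon.
\end{equation*}

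The only delicate point, and the main obstacle, is the already-noted footnote caveat that SKG's edges are not truly independent because of collisions in the adjacency matrix, so the multinomial identification is itself approximate; I would address this by interpreting ``SKG graph'' throughout as the (multi)graph produced by the literal Algorithm~\ref{alg:rmat2} (so that the multinomial decomposition is exact) and noting that PSKG is being compared to that same object. With this reading, the three steps above suffice and no further calculation is required.
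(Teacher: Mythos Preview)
Your proposal is correct and follows essentially the same route as the paper: condition on the total edge count $M=\sum_u X_u$, invoke the Poisson-conditioned-on-sum-is-multinomial lemma to identify the conditional law with SKG of size $m$, apply the hypothesis, and integrate out $M$ via the law of total probability. The paper's proof compresses exactly this into a single chain of equalities and an expectation, while you spell out the same steps (plus the harmless observation about the identical destination loop and the collision caveat) more explicitly.
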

\begin{proof}
\begin{gather*}
\mathbf{P}_{Poisson}(A)=
\mathbf{E}\left[
\mathbf{P}\left(A
\middle|\sum^k_{i=1}X_i=m\right)
\right]
\\
=\mathbf{E}\left[
\mathbf{P}_{Multinomial}(A|m)\right]
>
\mathbf{E}[1-\varepsilon]
=1-\varepsilon
\end{gather*}
\end{proof}

\begin{figure*}[ht]
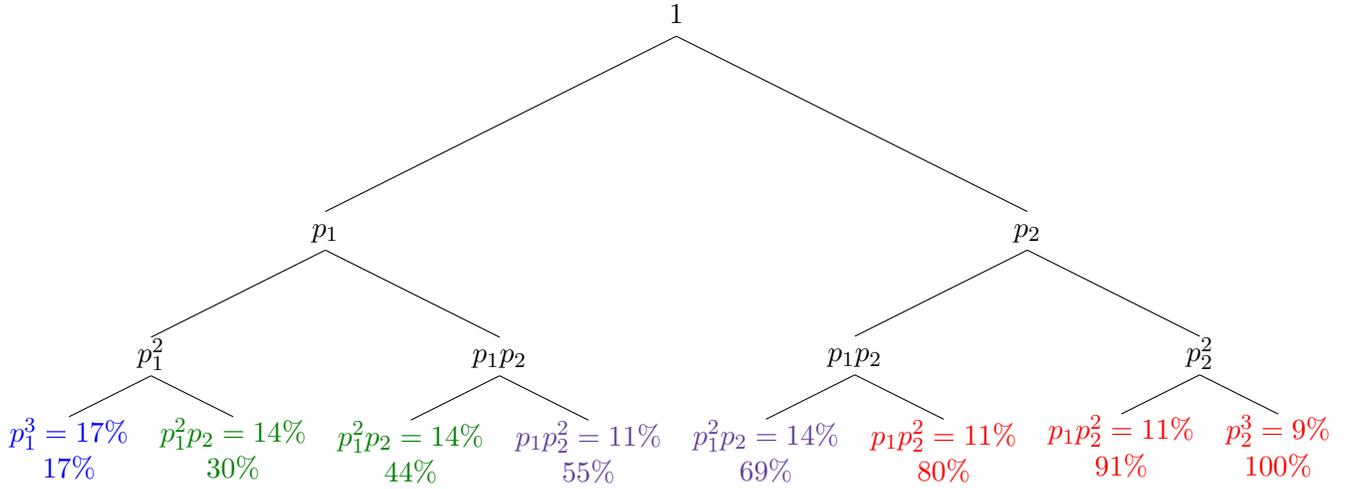

\Tree [.1 [.$p_1$ 
[.$p_1^2$   {\color{Blue}$p_1^3=17\%$}\\{\color{Blue}$17\%$} {\color{Green}$p_1^2p_2=14\%$}\\{\color{Green}$30\%$}  ]
 [.$p_1p_2$  {\color{Green}$p_1^2p_2=14\%$}\\{\color{Green}$44\%$}  {\color{RoyalPurple}$p_1p_2^2=11\%$}\\{\color{RoyalPurple}$55\%$} ]
  ]
  [.$p_2$ 
  [.$p_1p_2$  {\color{RoyalPurple}$p_1^2p_2=14\%$}\\{\color{RoyalPurple}$69\%$} {\color{Red}$p_1p_2^2=11\%$}\\{\color{Red}$80\%$} ]
  [.$p_2^2$  {\color{Red}$p_1p_2^2=11\%$}\\{\color{Red}$91\%$} {\color{Red}$p_2^3=9\%$}\\{\color{Red}$100\%$} ]
   ] ].1
\caption{An example of load balancing where $n=2$, $p_1=0.55$, $p_2=0.45$, $k=3$, and $N_\mathrm{w}=4$.
The second to last line denotes the load of each vertex and the last line denotes the cumulative load.
The 4 processors attempt to take $25\%$ of the total load each.
Consequently processor number {\color{Blue}0}, {\color{Green}1}, {\color{RoyalPurple}2}, and {\color{Red}3}
takes ownership to 
the {\color{Blue}blue}, {\color{Green}green}, {\color{RoyalPurple}purple}, and {\color{Red}red} vertices, respectively.}
\label{fig:tree}
\end{figure*}

\subsection{Probabilistic Load Balancing}
In vertex-centric BSP, where each computational worker
takes ownership to vertices and their outgoing edges, it is not a priori clear how to distribute them;
some vertices have more neighbors than others so assigning an equal number
to each worker will likely result in load imbalance.
As the storage requirement of the graph structure is proportional to the number of neighbors,
 we shall discuss load balancing 
with the goal of distributing the number of edges equally.

Let $N_\mathrm{w}$ denote the total number of workers to balance the load among
and $w_\mathrm{id}=0,1,\cdots (N_\mathrm{w}-1)$ denote the individual processor number.
Let $U^{[k]}$ be the $k$\nobreakdash-th Kronecker power of $U$ and $(U^{[k]})_u$ the
expected load proportion for vertex $u$.
Now we split the set of vertices
into contiguous partitions (contiguous by node numbering) so that
each partition has a total load of about $1/N_\mathrm{w}$ and is owned by one processor.
This procedure is illustrated in Figure \ref{fig:tree}.

To do this partitioning efficiently, however, one must avoid explicitly forming $U^{[k]}$. 
Algorithm \ref{alg:load}%
\footnote{The algorithm is a simplified version specifically for $n=2$.
The generalization to arbitrary $n$ is straightforward.}
 achieves this by traversing the decision tree without explicitly forming it.
 
One legitimate concern of this strategy is that the load is only balanced in expectation
and therefore it is possible that with bad luck
the actual load is highly unbalanced.
However, Theorem \ref{thm:bound} tells us that with high probability
the load imbalance is small.
\begin{thm}
\label{thm:bound}
The $\alpha$\nobreakdash-level confidence interval of the maximum load
over all computational nodes is
$[\frac{E}{N_\mathrm{w}},\frac{E}{N_\mathrm{w}}+\delta]$ where 
\[
\delta=
\sqrt{\frac{2E}{N_\mathrm{w}}}
\sqrt{\log N_\mathrm{w}+\left|\log\left|\log(1-\alpha)\right|\right|}
\]
where $\frac{E}{N_\mathrm{w}}$ is the load under perfect balance.
We can interpret $\delta$ as the degree of load imbalance.
\end{thm}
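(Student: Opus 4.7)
The crucial structural observation is that each worker's load decomposes as an independent Poisson sum. If worker $w$ owns the contiguous vertex set $S_w$, then by construction of PSKG the out-degrees $X_u$ for $u\in S_w$ are independent with $X_u\sim\mathrm{Poisson}(E(U^{[k]})_u)$, so
\[
L_w = \sum_{u\in S_w} X_u \sim \mathrm{Poisson}(\mu_w), \qquad \mu_w = E\sum_{u\in S_w}(U^{[k]})_u.
\]
The load balancing algorithm (Algorithm~\ref{alg:load}) was designed precisely so that $\sum_{u\in S_w}(U^{[k]})_u\approx 1/N_\mathrm{w}$, giving $\mu_w\approx E/N_\mathrm{w}$; set $\mu = E/N_\mathrm{w}$ throughout.

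The plan is then (i) control the upper tail of a single $L_w$, (ii) union-bound over the $N_\mathrm{w}$ workers, and (iii) invert in $\alpha$ to solve for $\delta$. For step (i), since $\mu$ is large in any interesting regime, I would apply a Chernoff/Bennett bound for Poisson to obtain a Gaussian-type tail $\mathbf{P}(L_w - \mu \geq t) \leq \exp(-t^{2}/(2\mu))$ (plus a negligible cubic correction). Step (ii) then gives
\[
\mathbf{P}\!\left(\max_{w} L_w \geq \mu + \delta\right) \leq N_\mathrm{w}\,\exp\!\left(-\frac{\delta^{2}}{2\mu}\right),
\]
and equating the right-hand side with $1-\alpha$ yields $\delta = \sqrt{2\mu}\sqrt{\log N_\mathrm{w} + |\log(1-\alpha)|}$, which matches the leading behavior of the claim.

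The main obstacle is sharpening the $|\log(1-\alpha)|$ correction to the stated $|\log|\log(1-\alpha)||$, which is the signature of extreme value theory rather than a naive union bound. To recover it, I would Gaussianize the $L_w$ (by CLT or a KMT-type coupling, since $L_w$ is a Poisson of large mean with variance $\mu$) and invoke the classical Fisher--Tippett--Gnedenko theorem: the maximum of $N_\mathrm{w}$ i.i.d.\ $\mathcal{N}(\mu,\mu)$ variables has a Gumbel limit centered at $\mu+\sqrt{2\mu\log N_\mathrm{w}}$ with scale $\sqrt{\mu/(2\log N_\mathrm{w})}$. Inverting the Gumbel CDF via $\exp(-e^{-x})=\alpha$ produces $x = -\log(-\log\alpha) \approx -\log|\log(1-\alpha)|$ for $\alpha$ near $1$, and expanding yields exactly the square root of $\log N_\mathrm{w}+|\log|\log(1-\alpha)||$ inside $\delta$. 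The technical nuisances are that the $\mu_w$ are only approximately equal (handled by dominating with the worst-case worker) and that the extreme value statement is asymptotic; one either appeals to non-asymptotic Gumbel-type inequalities for Gaussian maxima or absorbs the sub-leading error into a slightly enlarged constant.
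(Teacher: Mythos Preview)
Your proposal is correct in spirit and would recover the stated $\delta$, but it takes a longer route than the paper. You first try a union bound, notice that it yields $|\log(1-\alpha)|$ rather than $|\log|\log(1-\alpha)||$, and then reach for the full Fisher--Tippett--Gnedenko/Gumbel machinery to repair the correction term. The paper's argument is considerably more direct: it uses the \emph{exact} i.i.d.\ identity $\mathbf{P}(\max_i X_i\le M)=F_X(M)^{N_\mathrm{w}}$ (no union bound), sets this equal to the target level, and solves $M=F_X^{-1}\bigl((1-\alpha)^{1/N_\mathrm{w}}\bigr)$. A one-term Taylor expansion $(1-\alpha)^{1/N_\mathrm{w}}\approx 1+\log(1-\alpha)/N_\mathrm{w}$, a Gaussian approximation $X\approx\mathcal{N}(E/N_\mathrm{w},E/N_\mathrm{w})$, and then the elementary tail inversion $\Phi^{-1}(1-x)\lesssim\sqrt{-2\log x}$ (from the Mills-ratio bound in Abramowitz--Stegun) immediately produce the double logarithm, since $-\log\bigl(|\log(1-\alpha)|/N_\mathrm{w}\bigr)=\log N_\mathrm{w}+|\log|\log(1-\alpha)||$. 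In effect, the paper is carrying out by hand the very computation that underlies the Gumbel limit you invoke, so your EVT step is not wrong but is heavier than needed; conversely, your Chernoff/union-bound branch is a genuine one-sided inequality (an advantage over the paper's purely heuristic approximations) at the cost of the slightly looser $|\log(1-\alpha)|$ correction.
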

\begin{proof}
We first make the assumption that the expected load, $E$, is split and distributed perfectly,
i.e. each worker will have a load of $X_i$
where $X_0, \cdots X_{N_\mathrm{w}-1}$ are i.i.d. Poisson random variables with mean $E/N_\mathrm{w}$.
Let $M$ be the upper bound of the confidence interval.
\begin{gather*}
\mathbf{P}\left(\max_i X_i\le M\right)=\mathbf{P}\left(X_1\le M\right)^{N_\mathrm{w}}\\=
F_X^{N_\mathrm{w}}(M)\le
1-\alpha\\
M\le F^{-1}_X\left((1-\alpha)^{1/N_\mathrm{w}}\right)
\approx F^{-1}_X\left(1+\frac{\log(1-\alpha)}{N_\mathrm{w}}\right)\\
\approx
\sqrt{\frac{E}{N_\mathrm{w}}}\Phi^{-1}\left(1+\frac{\log(1-\alpha)}{N_\mathrm{w}}\right)+\frac{E}{N_\mathrm{w}}
\end{gather*}
$(1-\alpha)^{1/N_\mathrm{w}}$ is approximated by its Taylor series
and $X$ is approximated by a normal random variable
$\mathcal{N}(E/N_\mathrm{w},E/N_\mathrm{w})$
given by the central limit theorem.
Abramowitz\cite{AbramowitzStegun64} provides the following bound.
\[
\sqrt{2\pi}
(1-\Phi(x))=
\int^\infty_x e^{-t^2/2}dt
\le 2e^{-x^2/2}
\quad
\text{for }x\ge 0
\]
Using the fact that 
for non-increasing functions $f\le g$ implies $f^{-1}\le g^{-1}$
we arrive at the following.
\[
\Phi^{-1}\left(1-x\right)
\le 
\sqrt{2\log x}
\]
Putting these results together gives the theorem
\end{proof}

\begin{algorithm}
\caption{Load Balancing}
\begin{algorithmic}
\FOR {Each worker}
\STATE $r_\mathrm{low}=
w_\mathrm{id}/N_\mathrm{w}; \,
r_\mathrm{up}=(w_\mathrm{id}+1)/N_\mathrm{w}$
\STATE $b=0;$ //lower bound
\STATE $p_\mathrm{range}=1;$ //probability range
\STATE $u_\mathrm{low}=0$ //lower vertex id
\FOR {$i=1,\cdots k$}
\IF {$r_\mathrm{low}\le b+pp_\mathrm{range}$}
\STATE $p_\mathrm{range}=pp_\mathrm{range};\,u_\mathrm{low}=2u_\mathrm{low}$
\ELSE 
\STATE $b=b+pp_\mathrm{range}$
\STATE $p_\mathrm{range}=(1-p)p_\mathrm{range};\,u_\mathrm{low}=2u_\mathrm{low}+1$
\ENDIF
\ENDFOR
\STATE Repeat above with $u_\mathrm{up}$
\STATE Claim ownership to nodes $u_\mathrm{low}$ to
$u_\mathrm{up}$
\ENDFOR
\end{algorithmic}
\label{alg:load}
\end{algorithm}

\section{Experimental Results}
In this section, we
demonstrate that PSKG and SKG generate graphs with
essentially the same properties.
As SKG models real world networks well \cite{Leskovec:2010:KGA:1756006.1756039}
the equivalence between PSKG and SKG implies the modeling power of PSKG.

To generate and analyze the SKG and PSKG graphs
the SNAP library \cite{snap}
and
our own implementation of of PSKG on Apache Giraph \cite{giraphpskg} were
used, respectively.

\subsection{Graph Patterns}
There are several standard graph patterns that are used to
compare the similarity between networks.
In this paper,
we shall use the following patterns: degree distribution,
hop plot, scree plot, and network values.
These choices are motivated by Leskovec's \cite{Leskovec:2010:KGA:1756006.1756039}
work.

\textit{Degree distribution}: The histogram of the nodes' degrees with exponential binning.

\textit{Hop plot}: Number of reachable pairs $r(h)$
within $h$ hops, as a function of the number of hops $h$.

\textit{Scree plot}: Singular values of the graph adjacency matrix versus their rank.

\textit{Network values}: Distribution of the principal eigenvector components versus their rank. 

Figure \ref{fig:graph1} and Figure \ref{fig:graph2} compares the graph patterns of SKG and PSKG.
It is clear that the results are essentially the same.


\begin{figure*}[ht]
\center
\subfloat[Degree distribution]{
\includegraphics[width=0.45\textwidth]{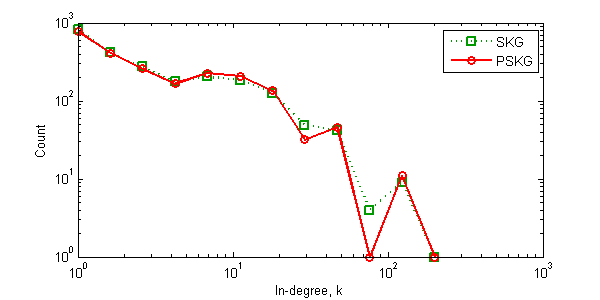}}
\subfloat[Hop plot]{
\includegraphics[width=0.45\textwidth]{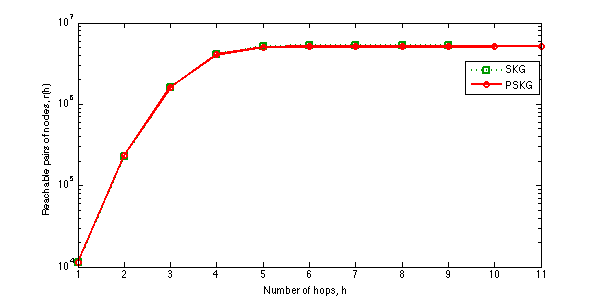}}
\\
\subfloat[Scree plot]{
\includegraphics[width=0.45\textwidth]{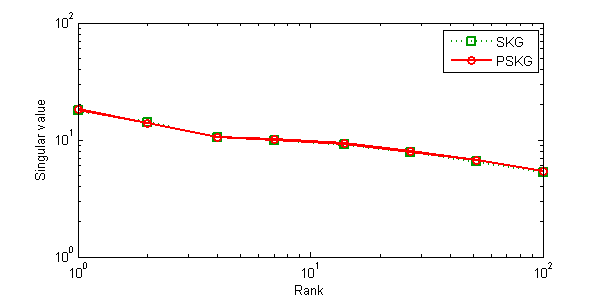}}
\subfloat[``Network value'']{
\includegraphics[width=0.45\textwidth]{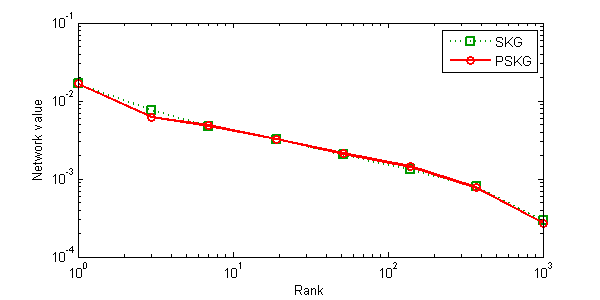}}
\caption{
Graph patterns with parameters 
$n=2$, $k=12$, $E=11400$, and $P=[0.4532,0.2622;0.2622,0.0225]$}
\label{fig:graph1}
\end{figure*}

\begin{figure*}[ht]
\center
\subfloat[Degree distribution]{
\includegraphics[width=0.45\textwidth]{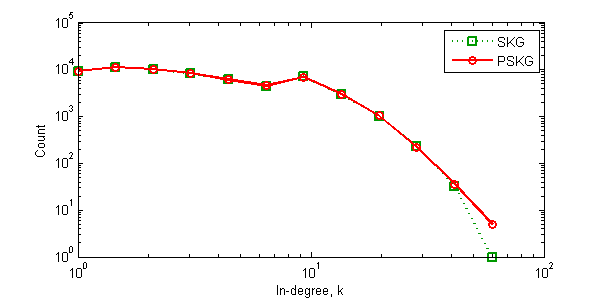}}
\subfloat[Hop plot]{
\includegraphics[width=0.45\textwidth]{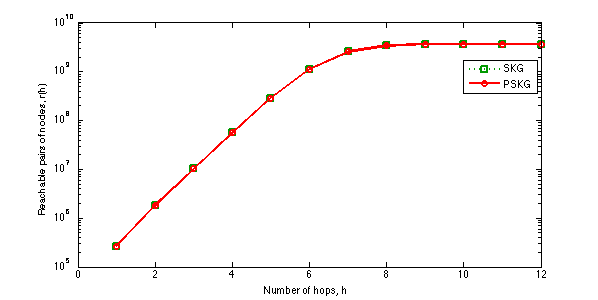}}
\\
\subfloat[Scree plot]{
\includegraphics[width=0.45\textwidth]{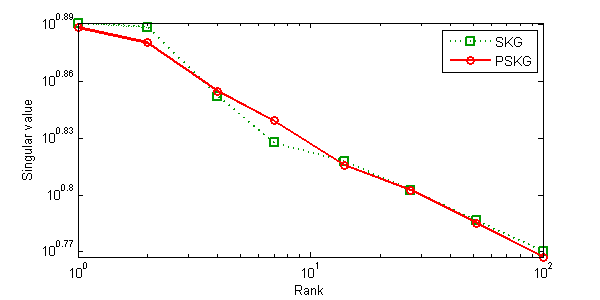}}
\subfloat[``Network value'']{
\includegraphics[width=0.45\textwidth]{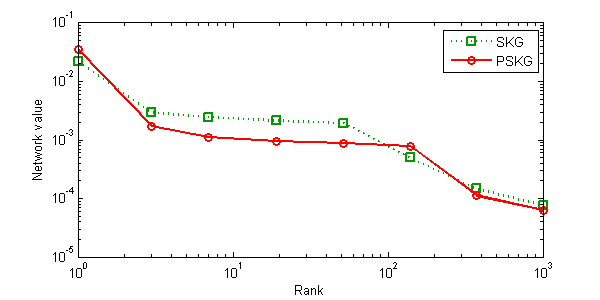}}
\caption{
Graph patterns with parameters
$n=4$, $k=8$, $E=263546$, and $P=[\alpha,\alpha,\alpha,\alpha;\alpha,\alpha,\beta,\beta;\alpha,\beta,\alpha,\beta;\alpha,\beta,\beta,\alpha]$
where $\alpha=0.0861$ and $\beta=0.0231$.
$P$ is the adjacency matrix of a star graph on 4 nodes
(center + 3 satellites)
with the $1$'s are replaced with $\alpha$
and the $0$'s are replaced with $\beta$.}
\label{fig:graph2}
\end{figure*}

\section{Conclusion}
In conclusion, PSKG is a network generation model that is more efficient than and yet
as powerful as SKG. Section 2 and 3 each provide theoretical and empirical evidence
to this statement.

One promising direction of future work is vertex-centric algorithms for model estimation.
There has been much work on SKG model fitting, which should directly apply to PSKG,
but most do not concern vertex-centric parallelism.
It would be interesting to see the efficiency a vertex-centric distributed fitting algorithm
can achieve compared to a serial or MapReduce implementation.

\printbibliography
\end{document}